\newtheorem{thm}{Theorem}
\def\<{\leqslant}           
\def\>{\geqslant}           
\def\wh{\widehat}
\def\mod{\mathrm{mod\,}}   
\def\Res{\mathop{\mathrm{Res\,}}} 
\def\Re{\mathrm{Re\,}}   
\def\Im{\mathrm{Im\,}}   
\def\mR{{\mathbb R}}    
\def\mC{\mathbb{C}}    
\def\Tr{{\rm Tr}}       
\def\rT{{\rm T}}        
\def\bE{\mathbf{E}}    
\def\[[[{[\![\![}
\def\]]]{]\!]\!]}
\def\re{{\rm e}}        
\def\rd{{\rm d}}        
\def\br{\mathbf{r}}
\def\x{\times}
\def\ox{\otimes}
\def\cA{\mathcal{A}}
\def\cB{\mathcal{B}}
\def\cE{\mathcal{E}}
\def\mU{\mathbb{U}}
\title{\bf
Physical Realizability and Mean Square Performance of Translation Invariant Networks of Interacting Linear Quantum Stochastic Systems}
\author{Igor G. Vladimirov, \quad Ian R. Petersen\thanks{This work is supported by the Australian Research Council.
The authors are with the School of Engineering and Information Technology,
University of New South Wales Canberra, ACT 2600, Australia. E-mail: {\tt igor.g.vladimirov@gmail.com, i.r.petersen@gmail.com}.}
}
\begin{document}

\maketitle

\thispagestyle{empty}

\begin{abstract}
This paper is concerned with translation invariant networks of linear quantum stochastic systems with nearest neighbour interaction mediated by boson fields.  The systems are associated with sites of a one-dimensional chain or a multidimensional lattice and are governed by coupled linear quantum stochastic differential equations (QSDEs). Such interconnections of open quantum  systems are relevant, for example, to the phonon theory of crystalline solids, atom trapping in optical lattices and quantum metamaterials.   In order to represent a large-scale open quantum harmonic oscillator, the coefficients of the coupled QSDEs must satisfy certain physical realizability conditions. These are established in the form of matrix algebraic equations for the parameters of an individual building block of the network and its interaction with the neighbours and external fields.
 We also discuss the computation of mean square performance functionals with block Toeplitz weighting matrices   for such systems in the thermodynamic limit per site for unboundedly increasing fragments of the lattice.
\end{abstract}
\begin{keywords}
quantum stochastic system,
translation invariant network,
nearest neighbour interaction,
open quantum harmonic oscillator,
physical  realizability,
spatial Fourier transform,
mean square performance,
circular sampling theorem,
Grenander-Szeg\"{o} limit theorem.
\end{keywords}

\section{\bf Introduction}

Large-scale networks of interacting dynamical systems with translational symmetry are ubiquitous in the physical world. A natural example is provided by crystalline solids, where the spatial periodicity in the arrangement of atoms results from their interaction and plays an important role in the thermodynamic and mechanical properties of solids  (such as heat capacity and speed of sound)  as reflected in the phonon theory  \cite{S_1990}.

Translation  invariant interconnections  are exploited in modern engineering applications such as
communication networks, vehicle  platooning and metamaterials   \cite{VBSH_2006}. The latter are exemplified by split ring resonator arrays which are artificially fabricated media with a negative refraction index. The unusual electrodynamic properties of such materials are achieved, not only due to the common structure of individual building blocks, but also the translation invariance of their interconnection. Moreover, the last feature is crucial for the nontrivial collective response of the whole system to time-varying  electromagnetic fields.

These ideas are taken from the classical macroscopic scale to the quantum level in quantum metamaterials \cite{QSMGH_2011} which are organised as one, two or three-dimensional \cite{Z_2012} periodic arrays of coherently coupled quantum systems. The latter form a fully quantum composite system which does not involve measurements.
Active research into this kind of artificial materials is inspired by qualitatively new  properties  of light-matter interaction which 
unveil previously hidden resources.
An  example is provided by artificial crystals of atoms trapped at nodes of an optical lattice which can be controlled by external fields and used for entanglement generation \cite{CBFFFRCIP_2013} or as a quantum memory \cite{NDMRLLWJ_2010}.

In light of the emerging technology of quantum metamaterials,  the present paper is concerned with the modelling and analysis of the dynamics of translation invariant networks of interacting linear quantum stochastic
systems which represent open quantum harmonic oscillators. In particular, we are concerned with physical realizability (PR) and mean square performance of this class of large-scale quantum systems from the viewpoint of quantum linear systems theory \cite{P_2010}.

In the present setting, the quantum systems form a one-dimensional chain or are associated with sites of a multidimensional lattice and are governed by a set of coupled linear quantum stochastic differential equations (QSDEs) in the framework of the Hudson-Parthasarathy noncommutative version of the Ito calculus \cite{HP_1984,P_1992}. Although these QSDEs look similar to the classical Ito SDEs widely used in linear stochastic control  theory \cite{AM_1989,KS_1972}, their coefficients 
must satisfy certain PR conditions \cite{JNP_2008,NJP_2009,SP_2012}.

The PR constraints are closely related with the postulate of a unitary evolution \cite{M_1998,S_1994} for isolated quantum systems (for example, those formed from a system of interest and its environment which may involve other quantum systems and external fields). Such an evolution preserves canonical commutation relations (CCRs) between quantum variables and is specified by the ``energetics'' of the underlying system and its interaction with the surroundings.

The dynamics of interconnected quantum stochastic systems are governed by QSDEs whose drift and diffusion terms are expressed in terms of scattering, coupling and Hamiltonian operators \cite{GJ_2009}.  For linear quantum stochastic systems,   the PR conditions reflect dynamic equivalence to an open quantum harmonic oscillator \cite{EB_2005,GZ_2004} whose dynamic variables satisfy CCRs. Its  Hamiltonian is quadratic and the coupling operator  is linear with respect to the dynamic variables. The resulting PR conditions are organised as quadratic constraints on the state-space matrices of the QSDEs \cite{JNP_2008,NJP_2009,SP_2012}.

Using the previous results on PR of linear QSDEs, we take advantage of the specific structure of coupled QSDEs for translation  invariant quantum networks with nearest neighbour interaction. This allows PR constraints to be established here in the form of matrix algebraic equations for the parameters of an individual building block of the network and its coupling to its neighbours and external fields.

Adapting the performance criteria used in the Coherent Quantum Linear Quadratic Gaussian (CQLQG) control/filtering problems \cite{NJP_2009,VP_2013a,VP_2013b}, we also discuss mean square functionals with block Toeplitz weighting matrices whose structure reflects the translation invariance of the quantum network. Under a stability condition, we compute the steady-state value of such a  functional per site for unboundedly increasing fragments of the network. This corresponds to the thermodynamic limit in equilibrium statistical mechanics \cite{R_1978}. The results of the paper can be used for the development of decentralised CQLQG controllers for large-scale quantum networks.
%
%

The paper is organized as follows. Section \ref{sec:chain} describes a one-dimensional chain of interacting quantum systems. Section~\ref{sec:ztrans} introduces spatial Fourier transforms of 
quantum processes. These are employed  in Section~\ref{sec:PR} to establish PR conditions for the governing QSDEs. 
Section \ref{sec:quadro} computes mean square performance functionals with block Toeplitz weights in the thermodynamic limit. 
Section \ref{sec:2dlattice} outlines an extension of the results to the multivariate case.

\section{\bf One-dimensional chain of linear quantum stochastic systems}\label{sec:chain}

Consider a one-dimensional chain of identical open quantum systems with nearest neighbour interaction. The interaction is  arranged in a coherent (measurement-free) fashion and is  mediated by quantum fields which propagate through quantum channels shown as directed edges in  Fig.~\ref{fig:chain}.
\begin{figure}[htbp]
\centering
\unitlength=1mm
\linethickness{0.4pt}
\begin{picture}(105.00,85.00)
    \multiput(0, 40)(40, 0){3}{
        \framebox(20,20)[cc]{{$F$}}
        \put(-10,35){\vector(0,-1){15}}
        \put(-10,0){\vector(0,-1){15}}

        \put(-20,5){\vector(-1,0){20}}
        \put(20,5){\vector(-1,0){20}}
        \put(-40,15){\vector(1,0){20}}
        \put(0,15){\vector(1,0){20}}
    }
    \put(13,80){\makebox(0,0)[cc]{{$w_{k-1}$}}}
    \put(53,80){\makebox(0,0)[cc]{{$w_{k}$}}}
    \put(93,80){\makebox(0,0)[cc]{{$w_{k+1}$}}}

    \put(13,20){\makebox(0,0)[cc]{{$r_{k-1}$}}}
    \put(53,20){\makebox(0,0)[cc]{{$r_{k}$}}}
    \put(93,20){\makebox(0,0)[cc]{{$r_{k+1}$}}}

    \put(-9,60){\makebox(0,0)[cc]{{$y_{k-2}^+ = u_{k-1}^+$}}}
    \put(-9,40){\makebox(0,0)[cc]{{$u_{k-2}^-=y_{k-1}^-$}}}

    \put(-23,52){\makebox(0,0)[cc]{$\vdots$}}
    \put(127,52){\makebox(0,0)[cc]{$\vdots$}}

    \put(33,60){\makebox(0,0)[cc]{{$y_{k-1}^+=u_k^+$}}}
    \put(33,40){\makebox(0,0)[cc]{{$u_{k-1}^-=y_k^- $}}}

    \put(73,60){\makebox(0,0)[cc]{{$y_k^+=u_{k+1}^+$}}}
    \put(73,40){\makebox(0,0)[cc]{{$u_k^-=y_{k+1}^- $}}}

    \put(114,60){\makebox(0,0)[cc]{{$y_{k+1}^+=u_{k+2}^+$}}}
    \put(114,40){\makebox(0,0)[cc]{{$u_{k+1}^-=y_{k+2}^- $}}}

\end{picture}\vskip-1cm
\caption{A fragment of a one-dimensional chain of linear quantum stochastic systems with nearest neighbour interaction. The subsystems and fields are numbered from left to right. The ``$+$'' and ``$-$'' superscripts indicate field  propagation in the  positive and negative directions, respectively.}
\label{fig:chain}
\end{figure}
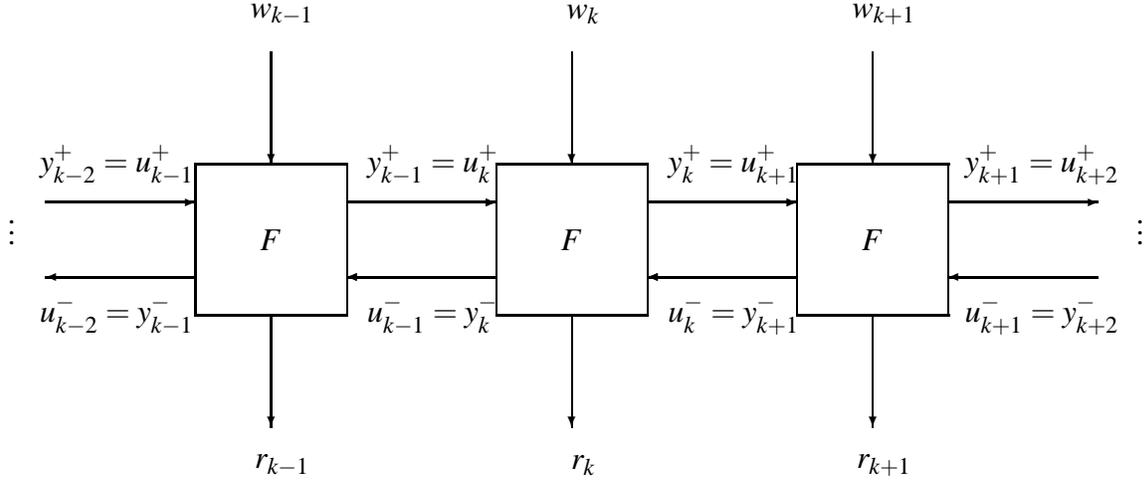
This infinite interconnection is a translation invariant quantum feedback network which, in practice, can be implemented as its finite fragment.
Suppose the fragment of the chain consists of $N$ building blocks which are numbered by integers $k = 0, \ldots, N-1$.
The $k$th block is a linear quantum stochastic system with ``rightward'' input and  output $u_k^+$ and $y_k^+$ of common dimension $m_+$ and ``leftward''  input and  output $u_k^-$ and $y_k^-$ of common dimension $m_-$; see Fig.~\ref{fig:block}.
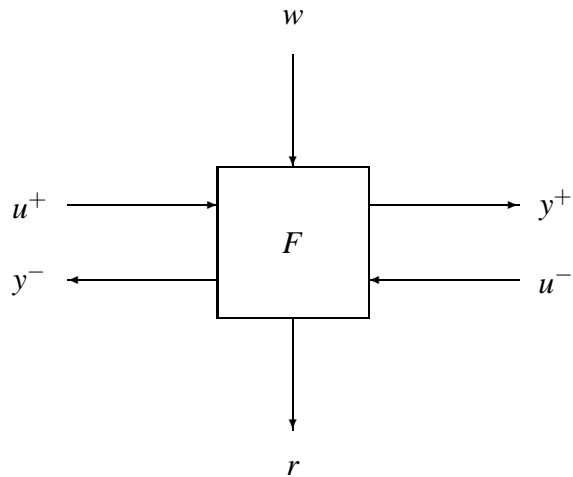
\begin{figure}[htbp]
\centering
\unitlength=1mm
\linethickness{0.4pt}
\begin{picture}(70.00,60.00)
    \put(20, 17){
        \framebox(20,20)[cc]{{$F$}}
        \put(-10,35){\vector(0,-1){15}}
        \put(-10,0){\vector(0,-1){15}}

        \put(-20,5){\vector(-1,0){20}}
        \put(20,5){\vector(-1,0){20}}
        \put(-40,15){\vector(1,0){20}}
        \put(0,15){\vector(1,0){20}}
        \put(-10,40){\makebox(0,0)[cc]{{$w$}}}

        \put(-10,-20){\makebox(0,0)[cc]{{$r$}}}

        \put(-45,15){\makebox(0,0)[cc]{{$u^+$}}}
        \put(-45,5){\makebox(0,0)[cc]{{$y^-$}}}

        \put(25,15){\makebox(0,0)[cc]{{$y^+$}}}
        \put(25,5){\makebox(0,0)[cc]{{$u^-$}}}
    }
\end{picture}\vskip5mm
\caption{A common building block of the chain.}
\label{fig:block}
\end{figure}
The system  is endowed with an $n$-dimensional vector $x_k$ of dynamic variables and is coupled to external input fields  which are modelled by an $m_0$-dimensional  quantum Wiener process $w_k$ on a boson Fock space \cite{P_1992}. As shown in Figs.~\ref{fig:chain}, \ref{fig:block},  the systems can also have external output fields $r_k$. However, they will be taken into consideration elsewhere. The entries of the vectors $x_k$, $w_k$, $u_k^{\pm}$, $y_k^{\pm}$  are self-adjoint operators on appropriate complex separable Hilbert spaces evolving in time in accordance with the Heisenberg picture of quantum dynamics \cite{M_1998}. Unless specified otherwise, vectors are organised as columns. The joint quantum Ito table of the quantum Wiener processes $w_k$ is given by
\begin{equation}
\label{ww}
    \rd w_j \rd w_k^{\rT} = \delta_{jk}\Omega\rd t,
    \qquad
    \Omega := I_{m_0} + i J.
\end{equation}
Here, the transpose $(\cdot)^{\rT}$ acts on vectors of operators as if the latter were scalars. Also,  $i:= \sqrt{-1}$ is the imaginary unit,
$\delta_{jk}$ is the Kronecker delta, $I_{m_0}$ is the identity matrix of order $m_0$, and
$J$ is a real antisymmetric matrix of order $m_0$. The matrix $J$ has spectral radius $\br(J)\< 1$ (thus ensuring the positive semi-definiteness  of the quantum Ito matrix $\Omega \succcurlyeq 0$) and
specifies the cross commutations between the boson fields as
\begin{equation}
\label{wCCR}
    [\rd w_j, \rd w_k^{\rT}]
    =
    2i
    \delta_{jk}
    J\rd t,
\end{equation}
where $[\alpha, \beta^{\rT}]:= \alpha\beta^{\rT} - (\beta\alpha^{\rT})^{\rT}$ is the commutator matrix.
Usually, the quantum noise dimension $m_0$ is even, and
$    J = I_{m_0/2} \ox
        {\begin{bmatrix}
        0 & 1\\
        -1  & 0
    \end{bmatrix}}
$,
where $\ox$ is the Kronecker product of matrices. In view of (\ref{ww}), (\ref{wCCR}), the quantum Wiener processes $w_k$ for different component systems are uncorrelated and commuting.
Now, the $k$th system in the chain is governed by the following linear QSDEs with constant coefficients
\begin{align}
\label{dx}
    \rd x_k
    &=
    A x_k \rd t + B \rd w_k + E \rd u_k,\\
\label{dy}
    \rd y_k
    &=
    C x_k \rd t + D \rd w_k.
\end{align}
Here,
\begin{equation}
\label{uy}
    u_k
    :=
    {\begin{bmatrix}
        u_k^+\\
        u_k^-
    \end{bmatrix}},
    \qquad
    y_k :=
    {\begin{bmatrix}
        y_k^+\\
        y_k^-
    \end{bmatrix}}
\end{equation}
are the input and output quantum processes, each having dimension
$
    m:= m_++m_-
$,
and
\begin{equation}
\label{CDE}
    C :=
    {\begin{bmatrix}
        C_+ \\
        C_-
    \end{bmatrix}},
    \qquad
    D
    :=
    {\begin{bmatrix}
        D_{+} \\
        D_{-}
    \end{bmatrix}},
    \qquad
    E :=
    {\begin{bmatrix}
        E_+ & E_-
    \end{bmatrix}},
\end{equation}
where $A \in \mR^{n\x n}$, $B\in \mR^{n\x m_0}$, $C_{\pm} \in \mR^{m_{\pm} \x n}$, $D_{\pm} \in \mR^{m_{\pm} \x m_0}$, $E_{\pm}\in \mR^{n\x m_{\pm}}$  are appropriately dimensioned real matrices. In view of the interconnection of the systems in the chain (see Fig.~\ref{fig:chain}),  the QSDEs (\ref{dx}), (\ref{dy}) are complemented by the algebraic relations for the rightward and leftward inputs and outputs of the adjacent systems:
\begin{equation}
\label{connect}
    y_{k-1}^+ = u_k^+,
    \qquad
    u_k^- = y_{k+1}^-.
\end{equation}
Furthermore, the  set of equations (\ref{dx}), (\ref{dy}), (\ref{connect}) for the fragment of the chain must be equipped with boundary conditions for the $0$th and $(N-1)$th systems. As a variant of such conditions, it can be assumed that the boundary inputs $u_0^+$ and $u_{N-1}^-$ are additional uncorrelated and commuting quantum Wiener processes.
However, in order to simplify the analysis at this stage, we will use the periodic boundary conditions (PBCs)
\begin{equation}
\label{PBC}
    u_0^+ = y_{N-1}^+,
    \qquad
    u_{N-1}^- = y_0^-,
\end{equation}
as shown in Fig.~\ref{fig:PBC}.
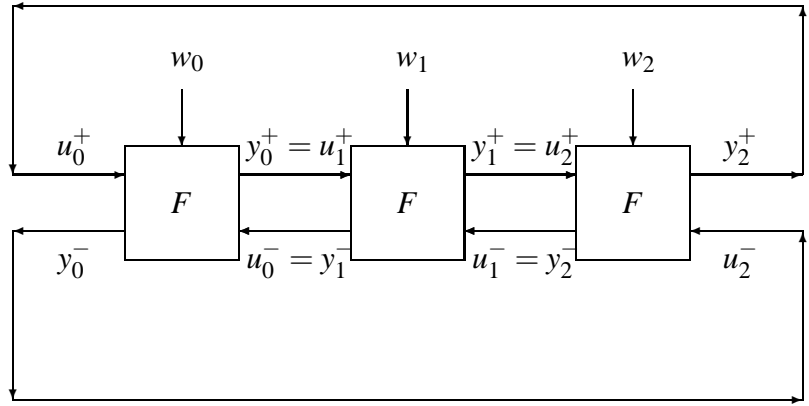
\begin{figure}[htbp]
\centering
\unitlength=0.75mm
\linethickness{0.4pt}
\begin{picture}(105.00,95.00)
    \multiput(0, 40)(40, 0){3}{
        \framebox(20,20)[cc]{{$F$}}
        \put(-10,30){\vector(0,-1){10}}

        \put(-20,5){\vector(-1,0){20}}
        \put(20,5){\vector(-1,0){20}}
        \put(-40,15){\vector(1,0){20}}
        \put(0,15){\vector(1,0){20}}
    }
    \put(13,75){\makebox(0,0)[cc]{{$w_0$}}}
    \put(53,75){\makebox(0,0)[cc]{{$w_1$}}}
    \put(93,75){\makebox(0,0)[cc]{{$w_2$}}}


    \put(-7,60){\makebox(0,0)[cc]{{$u_0^+$}}}
    \put(-7,40){\makebox(0,0)[cc]{{$y_0^-$}}}

    \put(33,60){\makebox(0,0)[cc]{{$y_0^+=u_1^+$}}}
    \put(33,40){\makebox(0,0)[cc]{{$u_0^-=y_1^- $}}}

    \put(73,60){\makebox(0,0)[cc]{{$y_1^+=u_2^+$}}}
    \put(73,40){\makebox(0,0)[cc]{{$u_1^-=y_2^- $}}}

    \put(111,60){\makebox(0,0)[cc]{{$y_2^+$}}}
    \put(111,40){\makebox(0,0)[cc]{{$u_2^-$}}}

    \put(122,55){\vector(0,1){30}}
    \put(122,85){\vector(-1,0){140}}
    \put(-18,85){\vector(0,-1){30}}

    \put(-18,45){\vector(0,-1){30}}
    \put(-18,15){\vector(1,0){140}}
    \put(122,15){\vector(0,1){30}}
\end{picture}\vskip-5mm
\caption{An illustration of  the PBCs for a fragment of the chain with $N = 3$.}
\label{fig:PBC}
\end{figure}
In this case, the fragment of the chain acquires a ring topology, 
and the interconnection rules (\ref{connect}) take a more unified form
\begin{equation}
\label{connectPBC}
    u_k
    =
    {\begin{bmatrix}
        I_{m_+} & 0\\
        0 & 0
    \end{bmatrix}}
    y_{k-1}
    +
    {\begin{bmatrix}
        0 & 0\\
        0 & I_{m_-}
    \end{bmatrix}}
    y_{k+1},
    \qquad
    k=0, \ldots, N-1,
\end{equation}
where $k\pm 1$  are calculated modulo $N$. This representation is convenient for the harmonic analysis of the quantum network in the spatial frequency domain.

\section{\bf Spatial Fourier transforms}\label{sec:ztrans}

Consider the $z$-transform of the quantum processes $x_k$, $w_k$, $u_k$, $y_k$ in (\ref{dx})--(\ref{uy}) over the ``spatial'' subscript $k = 0, \ldots, N-1$ which numbers the systems in the fragment of the chain:
\begin{align}
\label{XW}
    X_z(t)
    & :=
    \sum_{k=0}^{N-1} z^{-k}x_k(t),
    \quad
    W_z(t)
    :=
    \sum_{k=0}^{N-1} z^{-k}w_k(t),\\
\label{U}
    U_z(t)
    & :=
    {\begin{bmatrix}
        U_z^+(t)\\
        U_z^-(t)
    \end{bmatrix}},
    \qquad
    U_z^{\pm}(t)
    :=
    \sum_{k=0}^{N-1} z^{-k} u_k^\pm(t),\\
\label{Y}
    Y_z(t)
    & :=
    {\begin{bmatrix}
        Y_z^+(t)\\
        Y_z^-(t)
    \end{bmatrix}},
    \qquad\
    Y_z^{\pm}(t)
    :=
    \sum_{k=0}^{N-1}
    z^{-k}
    y_k^{\pm}(t),
\end{align}
where $z$ is a nonzero complex parameter.  Note that, being linear combinations of self-adjoint operators with complex coefficients, the entries of the vectors $X_z(t)$, $W_z(t)$, $U_z(t)$, $Y_z(t)$  are not self-adjoint.
By applying $z$-transforms to the linear QSDEs (\ref{dx}), (\ref{dy}) and using (\ref{XW})--(\ref{Y}), it follows that the quantum processes $X_z$, $W_z$, $U_z$, $Y_z$, as functions of time $t$, satisfy QSDEs with the same coefficients:
\begin{align}
\label{dX}
    \rd X_z
    &=
    A X_z \rd t + B \rd W_z + E \rd U_z,\\
\label{dY}
    \rd Y_z
    &=
    C X_z \rd t + D \rd W_z,
\end{align}
where the time arguments are omitted for brevity.
Furthermore, in the framework of the PBCs (\ref{PBC}), the equalities (\ref{connectPBC}) imply that the $z$-transforms $U_z$, $Y_z$ in (\ref{U}), (\ref{Y}) are related by
\begin{align}
\nonumber
    U_z
    =&
    {\begin{bmatrix}
        I_{m_+} & 0\\
        0 & 0
    \end{bmatrix}}
    \left(
        z^{-1}Y_z + \left(1 - z^{-N}\right)y_{N-1}
    \right)\\
\nonumber
    & +
    {\begin{bmatrix}
        0 & 0\\
        0 & I_{m_-}
    \end{bmatrix}}
    \left(
        z Y_z + z\left(z^{-N}-1\right)y_0
    \right)\\
\label{UY}
    = &
    {\begin{bmatrix}
        z^{-1}I_{m_+} & 0\\
        0 & zI_{m_-}
    \end{bmatrix}}
    Y_z
    +
    \left(1 - z^{-N}\right)
    {\begin{bmatrix}
        y_{N-1}^+\\
        -zy_0^-
    \end{bmatrix}}.
\end{align}
In particular, the boundary outputs $y_0^-$ and $y_{N-1}^+$ of the chain fragment make no contribution  to (\ref{UY}) when $z$ belongs to the set of $N$th roots of unity
\begin{equation}
\label{roots}
    \mU_N:=
    \big\{
        \re^{2\pi i\ell/N}:\
        \ell = 0, \ldots, N-1
    \big\}.
\end{equation}
In this case, the quantum processes $X_z$, $W_z$, $U_z$, $Y_z$ in (\ref{XW})--(\ref{Y}), as functions of $\ell$ in (\ref{roots}),  become the spatial discrete Fourier transforms (DFT) of the quantum processes $x_k$, $w_k$,  $u_k$,  $y_k$  over $k = 0, \ldots, N-1$, with $2\pi\ell/N$ playing the role of a wavenumber.
Recall that the set $\mU_N$ is a multiplicative group which is isomorphic to the additive group of residues modulo $N$. Since $\mU_N$ is a subset of the  unit circle
$$
    \mU:= \{z \in \mC:\ |z| = 1\}
$$
in the complex plane, the inversion in $\mU_N$ is equivalent to the complex conjugation: $z^{-1} = \overline{z}$.
In view of (\ref{UY}), for any $z\in \mU_N$, the quantum processes $U_z$, $Y_z$  are related by a static (time-independent) unitary transformation as
\begin{equation}
\label{K}
    U_z
    =
    {\begin{bmatrix}
        z^{-1}Y_z^+\\
        zY_z^-
    \end{bmatrix}}
    =
    K_z
    Y_z,
    \qquad
    K_z
    :=
    {\begin{bmatrix}
        z^{-1}I_{m_+} & 0\\
        0 & zI_{m_-}
    \end{bmatrix}},
\end{equation}
where use is also made of (\ref{Y}). The relationship (\ref{K}) describes two feedback loops with phase shift factors $z^{\pm1}$ (see Fig.~\ref{fig:blockz})
\begin{figure}[htbp]
\centering
\unitlength=0.75mm
\linethickness{0.4pt}
\begin{picture}(60.00,80.00)
    \put(20, 25){
        \framebox(20,20)[cc]{{$F$}}
        \put(-15,40){\framebox(10,10)[cc]{{$z^{-1}$}}}
        \put(-15,-30){\framebox(10,10)[cc]{{$z$}}}
        \put(-10,28){\vector(0,-1){8}}

        \put(-20,5){\vector(-1,0){20}}
        \put(20,5){\vector(-1,0){20}}

        \put(-40,15){\vector(1,0){20}}
        \put(0,15){\vector(1,0){20}}
        \put(20,15){\vector(0,1){30}}
        \put(20,45){\vector(-1,0){25}}
        \put(-15,45){\vector(-1,0){25}}
        \put(-40,45){\vector(0,-1){30}}

        \put(-40,5){\vector(0,-1){30}}
        \put(-40,-25){\vector(1,0){25}}
        \put(-5,-25){\vector(1,0){25}}
        \put(20,-25){\vector(0,1){30}}

        \put(-10,32){\makebox(0,0)[cc]{{$W_z$}}}


        \put(-45,15){\makebox(0,0)[rc]{{$U_z^+$}}}
        \put(-45,5){\makebox(0,0)[rc]{{$Y_z^-$}}}

        \put(25,15){\makebox(0,0)[lc]{{$Y_z^+$}}}
        \put(25,5){\makebox(0,0)[lc]{{$U_z^-$}}}
    }
\end{picture}\vskip1cm
\caption{A block diagram of the QSDEs (\ref{dX}), (\ref{dY}) combined with (\ref{K}).}
\label{fig:blockz}
\end{figure}
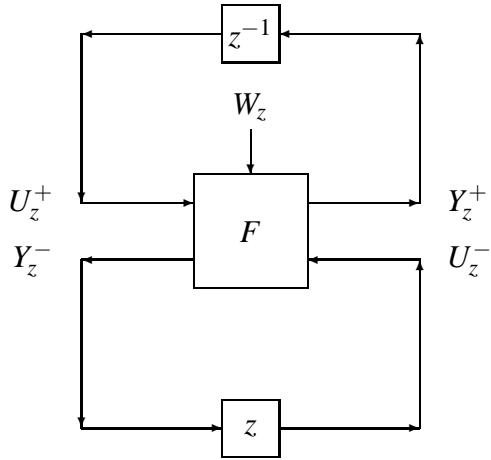
and  allows $U_z$ to be eliminated from the QSDE (\ref{dX}) by substituting $\rd Y_z$ from (\ref{dY}) into $\rd U_z = K_z \rd Y_z$, which  yields  the QSDE
\begin{equation}
\label{dX1}
    \rd X_z
    =
    \cA_z X_z \rd t + \cB_z \rd W_z,
\end{equation}
with
\begin{equation}
\label{cAB}
    \cA_z := A + EK_z  C,
    \qquad
    \cB_z := B + E K_z D.
\end{equation}
This procedure can be  justified by the elimination of edges in a quantum feedback network in the zero time delay limit \cite{GJ_2009}.
Since the diagonal matrix $K_z$ in (\ref{K}) satisfies
$
    K_{1/z} = \overline{K_z}
$ for any $z$ on the unit circle, and the state-space matrices of the QSDEs (\ref{dx}), (\ref{dy}) are real, then (\ref{cAB}) implies that
\begin{equation}
\label{star}
    \cA_z^* = \cA_{1/z}^{\rT},
    \qquad
    \cB_z^* = \cB_{1/z}^{\rT},
    \qquad
    z \in \mU,
\end{equation}
where $(\cdot)^*:= (\overline{(\cdot)})^{\rT}$ is the complex conjugate transpose of a matrix.
Note that $W_z(t)$ in (\ref{XW}), as functions of time $t$,   are quantum Wiener processes whose  joint quantum Ito table, in view of (\ref{ww}),  is computed as
\begin{equation}
\label{WW}
    \rd W_z \rd W_v^{\dagger}
    =
    \sum_{j,k=0}^{N-1}
    z^{-j}v^k
    \rd w_j \rd w_k^{\rT}
    =
    \sum_{k=0}^{N-1}
    \Big(\frac{v}{z}\Big)^k
        \Omega \rd t
    =
    N \delta_{zv}\Omega \rd t
\end{equation}
for any roots of unity $z, v \in \mU_N$ from (\ref{roots}), with $(\cdot)^{\dagger}:= ((\cdot)^{\#})^{\rT}$ the transpose
of the entry-wise adjoint $(\cdot)^{\#}$ of a matrix of operators.
  Here, use is also made of the bilinearity of the commutator, the group property of $\mU_N$, and the identity
$
    \sum_{k=0}^{N-1}
    \zeta^k
    =
    N \delta_{\zeta 1}
%
$ for all $\zeta \in \mU_N$.
Note that the right-hand side of  (\ref{WW}) is nonzero only for $z=v$, and hence, $W_z$ and $W_v$ are uncorrelated and commuting for $z\ne v$.
In accordance with (\ref{wCCR}), the relation (\ref{WW}) implies
\begin{align}
\label{WCCR}
    [\rd W_z, \rd W_v^{\dagger}]
    =
    2i \delta_{zv} N  J \rd t.
\end{align}
Therefore, instead of the original system of coupled QSDEs  (\ref{dx}), (\ref{dy}), (\ref{connect})
(whose number increases with the chain fragment size $N$), we have obtained the algebraically closed QSDEs (\ref{dY}),  (\ref{dX1}) parameterized by $z \in \mU$. For any fixed  $z$, these QSDEs describe a  linear quantum stochastic system $F_z$ with the quadruple of state-space matrices $\cA_z$, $\cB_z$, $C$, $D$ in (\ref{cAB}) and input and output fields $W_z$ and $Y_z$.
%
%
This family of quantum systems (in which every member can be regarded independently of the others) encodes the dynamics of the whole network. In fact, the systems $F_z$, considered for different values of $z$, are analogous to the independent spatial modes of vibration in  the phonon theory of crystal lattices \cite{S_1990}.

\section{\bf Physical realizability conditions}\label{sec:PR}

The dynamic variables of the quantum systems in the chain are self-adjoint operators which act initially on  copies of a common Hilbert space.  They are assumed to satisfy the CCRs of a quantum harmonic oscillator \cite{M_1998} for each of the constituent systems and commute for different systems:
\begin{equation}
\label{xCCR}
    [x_j, x_k^{\rT}] = 2i\delta_{jk}\Theta,
    \qquad
    0\<
    j,k< N.
\end{equation}
Here,  $\Theta$ is a real antisymmetric matrix of order $n$.
A necessary condition for the set of linear QSDEs (\ref{dx}), (\ref{dy}) to be PR \cite{JNP_2008,NJP_2009,SP_2012}  as an open quantum harmonic  oscillator is the preservation in time of the CCRs (\ref{xCCR}) for the state variables along with the commutativity of the state and output
\begin{equation}
\label{xyCCR}
    [x_k, y_k^{\rT}] = 0,
    \qquad
    k = 0,\ldots, N-1.
\end{equation}
The latter property reflects the non-demolition nature of coherent quantum feedback interconnections whereby the output fields of open quantum systems  behave like ideal measurements with respect to the internal dynamic variables of the systems \cite{B_1989}. Note that, at the initial moment of time,  the state and output variables (as operators on different initial Hilbert spaces) commute for any pair of the constituent systems:
\begin{equation}
\label{xyCCR0}
    [x_j(0), y_k(0)^{\rT}] = 0,
    \qquad
    0\< j,k <N.
\end{equation}
Similarly to (\ref{WCCR}),
the CCRs (\ref{xCCR}) can be equivalently represented in terms of the DFT $X_z$ in (\ref{XW}) as
\begin{align}
\nonumber
    [X_z, X_v^{\dagger}]
    &= \sum_{j,k=0}^{N-1} z^{-j} v^{k}  [x_j, x_k^{\rT}]\\
\label{XCCR}
    & =
    2i\Theta \sum_{k=0}^{N-1} (v/z)^k
    =
    2i\delta_{zv}N\Theta,
    \qquad
    z,v \in \mU_N.
\end{align}
The equivalence between (\ref{xCCR}) and (\ref{XCCR}) follows from the fact that the commutator matrices in \begin{equation}
\label{XXxx}
([X_z,X_v^{\dagger}])_{z, v \in \mU_N} = N \Phi_N\, ([x_j,x_k^{\rT}])_{0\< j, k< N}\, \Phi_N^*
\end{equation}
are related by the  unitary matrix from the DFT $    \Phi_N
    :=
    \frac{1}{\sqrt{N}}
    \big(
        \re^{-2\pi i \ell \mu/N}
    \big)_{0\< \ell, \mu< N}
$.
Application
of the inverse DFT to (\ref{XW}) and (\ref{Y}) with
\begin{equation}
\label{invFourier}
    x_k= \frac{1}{N} \sum_{z\in \mU_N} z^k X_z,
    \qquad
    y_k= \frac{1}{N} \sum_{v\in \mU_N} v^k Y_v
\end{equation}
yields
$$
    [x_k, y_k^{\rT}]
    =
    \frac{1}{N^2}
    \sum_{z,v\in \mU_N}
    (z/v)^k [X_z, Y_v^{\dagger}]
    =
    \frac{1}{N^2}
    \sum_{r\in \mU_N}
    r^k
    \sum_{v\in \mU_N}
    [X_{rv}, Y_v^{\dagger}],
$$
and hence,
\begin{equation}
\label{xyXY}
    \frac{1}{N}
    \sum_{v\in \mU_N}
    [X_{rv}, Y_v^{\dagger}]
    =
    \sum_{k=0}^{N-1}
    r^{-k}    [x_k, y_k^{\rT}].
\end{equation}
Since the left-hand side of (\ref{xyXY}) is the DFT of the sequence of commutators $[x_k,y_k^{\rT}]$, the fulfillment of (\ref{xyCCR}) is equivalent to
\begin{equation}
\label{xyXY0}
    \sum_{v\in \mU_N}
    [X_{rv}, Y_v^{\dagger}]
    =
    0,
    \qquad
    r\in \mU_N.
\end{equation}
Note that (\ref{xyXY0}) is satisfied at the initial moment of time due to (\ref{xyCCR0}) whereby
\begin{equation}
\label{XYCCR0}
  [X_z(0),Y_v(0)^{\dagger}] = 0,
  \qquad
  z,v \in \mU_N.
\end{equation}

\begin{thm}
\label{th:CCR}
The CCRs (\ref{xCCR}), (\ref{xyCCR}) for the dynamic and output variables of the systems $0, \ldots, N-1$  in the chain are preserved in time if and only if the state-space matrices in (\ref{cAB}) satisfy
\begin{align}
\label{Thetadot0}
    \cA_z \Theta + \Theta \cA_z^* + \cB_z J \cB_z^* & = 0,
    \qquad
    z \in \mU_N,\\
\label{xyCCRdot0}
    \sum_{z\in \mU_N}
    \cA_z^p
    (\Theta C^{\rT}+\cB_z J D^{\rT}) & = 0,
    \qquad
    p = 0,\ldots, n-1.
\end{align}
\end{thm}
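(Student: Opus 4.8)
The plan is to differentiate the two families of commutator matrices, $[X_z,X_v^\dagger]$ and $[X_z,Y_v^\dagger]$, in time using the quantum Ito product rule applied to the closed QSDEs (\ref{dX1}), (\ref{dY}), and then to read off the algebraic constraints that freeze these commutators at their initial values. Throughout I will use two standard simplifications: by adaptedness, the forward increments $\rd W_z(t)$ commute with the current operators $X_v^\dagger(t)$ and $Y_v^\dagger(t)$, so the first-order noise terms drop out of every commutator; and the only surviving Ito correction is the noise--noise term, evaluated via (\ref{WCCR}).

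First, for the state CCRs, set $G_{zv}(t):=[X_z,X_v^\dagger]$. The product rule $\rd(X_zX_v^\dagger)=(\rd X_z)X_v^\dagger+X_z\rd X_v^\dagger+(\rd X_z)(\rd X_v^\dagger)$, together with the two simplifications above, yields the closed deterministic ODE $\dot G_{zv}=\cA_z G_{zv}+G_{zv}\cA_v^*+2i\delta_{zv}N\cB_zJ\cB_v^*$, with initial data $G_{zv}(0)=2i\delta_{zv}N\Theta$ from (\ref{XCCR}). The off-diagonal entries ($z\neq v$) solve a homogeneous Lyapunov equation from zero and therefore stay zero, imposing no condition. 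The diagonal entries remain frozen at $2iN\Theta$ for all $t$ precisely when the right-hand side vanishes there, which is exactly (\ref{Thetadot0}); uniqueness of the solution of this linear ODE delivers both directions of the equivalence simultaneously.

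Second, for the state--output commutativity, set $H_{zv}(t):=[X_z,Y_v^\dagger]$ and repeat the computation with $\rd Y_v=CX_v\rd t+D\rd W_v$, so that $\rd Y_v^\dagger=X_v^\dagger C^\rT\rd t+\rd W_v^\dagger D^\rT$ since $C,D$ are real. The same reasoning leaves $\dot H_{zv}=\cA_z H_{zv}+[X_z,X_v^\dagger]C^\rT+2i\delta_{zv}N\cB_zJD^\rT$; substituting the value $[X_z,X_v^\dagger]=2i\delta_{zv}N\Theta$ just established closes this into $\dot H_{zv}=\cA_z H_{zv}+2i\delta_{zv}N(\Theta C^\rT+\cB_zJD^\rT)$, with $H_{zv}(0)=0$ by (\ref{XYCCR0}). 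The requirement to be preserved is (\ref{xyXY0}), namely $S_r(t):=\sum_{v\in\mU_N}H_{rv}(t)=0$ for every $r\in\mU_N$. Since the forcing of $H_{rv,v}$ carries the factor $\delta_{rv,v}=\delta_{r,1}$, for $r\neq1$ each summand solves a homogeneous equation from zero and so $S_r\equiv0$ automatically; only $r=1$ survives, where $S_1(t)=2iN\sum_{v\in\mU_N}\int_0^t\re^{(t-s)\cA_v}\rd s\,(\Theta C^\rT+\cB_vJD^\rT)$.

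Finally, expanding this entire function of $t$, its $(p+1)$-st derivative at $t=0$ equals $2iN\sum_{v\in\mU_N}\cA_v^p(\Theta C^\rT+\cB_vJD^\rT)$, so $S_1\equiv0$ is equivalent to the vanishing of all these sums for $p=0,1,2,\dots$. The necessity of (\ref{xyCCRdot0}) for $p=0,\dots,n-1$ is then immediate. The remaining, and I expect \emph{main}, obstacle is the converse: showing that once the first $n$ sums vanish, all higher ones vanish as well. This will not follow from applying the Cayley--Hamilton theorem to each $\cA_v$ separately, because the $\cA_v$ generally have distinct characteristic polynomials and the summation does not commute with the individual reductions; the argument must instead exploit the state constraint (\ref{Thetadot0}) and the symmetry (\ref{star}) relating $\cA_z$ to $\cA_{1/z}^{\rT}$ to obtain a recursion that truncates the hierarchy at order $n$. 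Everything else is a careful but routine application of the quantum Ito calculus.
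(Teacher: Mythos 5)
Your derivation of the two commutator ODEs is exactly the paper's argument: the same Ito-product computation gives $\rd[X_z,X_v^{\dagger}]=2i\delta_{zv}N(\cA_z\Theta+\Theta\cA_v^*+\cB_zJ\cB_v^*)\rd t$ and the forced linear equation for $[X_z,Y_v^{\dagger}]$ with zero initial data (\ref{XYCCR0}); the off-diagonal pairs $z\ne v$ carry no forcing and stay at their conserved values for the same reason as in the paper; and your reduction of (\ref{xyCCR}) to the single condition $\sum_{z\in\mU_N}\int_0^t\re^{\tau\cA_z}\rd\tau\,(\Theta C^{\rT}+\cB_zJD^{\rT})=0$ for all $t$, hence to the vanishing of all Taylor coefficients $\sum_{z\in\mU_N}\cA_z^p(\Theta C^{\rT}+\cB_zJD^{\rT})$ for every $p\ge 0$, coincides with (\ref{XYCCRexp})--(\ref{sumint}). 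Up to that point the proposal is correct and follows the same route as the paper, with the minor stylistic difference that you argue preservation via uniqueness of solutions of the Lyapunov ODE rather than by substituting the CCR values into the right-hand side.

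What you have not proved is the ``if'' direction of the theorem: that the conditions (\ref{xyCCRdot0}) for $p=0,\ldots,n-1$ already imply the vanishing for all $p\ge 0$. As submitted, the proposal establishes only the necessity of (\ref{xyCCRdot0}), so it is incomplete. That said, your diagnosis of why the step is delicate is accurate, and it applies verbatim to the paper's own one-line justification (``repeatedly differentiating ... and using the Cayley--Hamilton theorem''): writing $\cA_z^{n}=-\sum_{j=0}^{n-1}c_j(z)\cA_z^{j}$ with the $z$-dependent characteristic coefficients $c_j(z)$ gives $\sum_{z}\cA_z^{n}M_z=-\sum_{j=0}^{n-1}\sum_{z}c_j(z)\cA_z^{j}M_z$, where $M_z:=\Theta C^{\rT}+\cB_zJD^{\rT}$, and the inner sums are not the quantities assumed to vanish because $c_j(z)$ cannot be pulled outside the sum over $z\in\mU_N$. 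Applying Cayley--Hamilton instead to the block-diagonal aggregate $\diag_{z\in\mU_N}\cA_z$ only truncates the hierarchy at $p=nN-1$. So either one needs the stronger per-$z$ conditions $\cA_z^pM_z=0$, or an additional structural argument (perhaps exploiting (\ref{Thetadot0}) and the symmetry (\ref{star}), as you suggest) to force truncation at order $n$; neither you nor the paper supplies it. In short: same approach as the paper, with the one genuinely nontrivial step left open -- and your observation that the paper's appeal to Cayley--Hamilton does not close it as stated is well founded.
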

\begin{proof}
In view of (\ref{XXxx}), the preservation of the CCRs (\ref{xCCR}) is equivalent to the preservation of CCRs (\ref{XCCR}) for the DFT of the dynamic variables.
Now, by employing the ideas of \cite[Proof of Theorem~2.1 on pp. 1798--1799]{JNP_2008},  the quantum Ito formula can be combined with the bilinearity of the commutator as
$
    \rd [\xi,\eta]
    =
    [\rd \xi, \eta] + [\xi, \rd \eta] + [\rd \xi,\rd \eta]
$ and applied to the left-hand side of (\ref{XCCR}), which  yields an ODE
\begin{align}
\nonumber
\rd [X_z,X_v^{\dagger} ]
     = &
     [\rd  X_z, X_v^{\dagger}]
     +
     [X_z, \rd X_v^{\dagger}]
     +
     [\rd X_z, \rd X_v^{\dagger}]          \\
\nonumber
     = &
     [
        \cA_z X_z \rd t + \cB_z \rd W_z,
        X_v^{\dagger}
    ] +
    [
        X_z,
        X_v^{\dagger}\cA_v^* \rd t + \rd W_v^{\dagger}\cB_v^*
    ]\\
\nonumber
      & +
    [
        \cA_z X_z \rd t + \cB_z \rd W_z,
        X_v^{\dagger}\cA_v^* \rd t + \rd W_v^{\dagger}\cB_v^*
    ]\\
\nonumber
     = &
    (\cA_z
    [
        X_z,
        X_v^{\dagger}
    ]
    +
    [
        X_z,
        X_v^{\dagger}
    ]
    \cA_v^*
    )\rd t+
    \cB_z
    [
        \rd W_z,
        \rd W_v^{\dagger} ]\cB_v^*\\
\label{dXCCR}
        = &
        2i         \delta_{zv}N
        (    \cA_z\Theta
    +
    \Theta
    \cA_v^*
    +
    \cB_z J\cB_v^*
    )\rd t.
\end{align}
Here, 
the commutativity between the adapted process $X_z$ and the forward increments $\rd W_z$  of the $z$-transformed quantum Wiener processes in (\ref{dX}) is used together with the quantum Ito product rules  \cite{P_1992} and (\ref{WCCR}). Now, for any two different roots $z \ne v$, the Kronecker delta $\delta_{zv}$ makes the right-hand side of (\ref{dXCCR}) vanish, and $[X_z, X_v^{\dagger}]$ is a conserved quantity in accordance with (\ref{XCCR}).  For equal roots $z=v \in \mU_N$, the ODE (\ref{dXCCR}) takes the form
$
    [X_z, X_z^{\dagger}]^{^\bullet}
        =
        2i
        N
        (    \cA_z\Theta
    +
    \Theta
    \cA_z^*
    +
    \cB_z J\cB_z^*
    )
$.
Therefore,
the preservation of the CCRs (\ref{XCCR}), or, equivalently, (\ref{xCCR}), holds  if and only if (\ref{Thetadot0}) is satisfied. Turning to (\ref{xyCCR}), we will now  consider the time evolution of $[X_z, Y_v^{\dagger}]$ for any fixed but otherwise arbitrary roots $z,v \in \mU_N$. Similarly to (\ref{dXCCR}), it follows that
\begin{align*}
\nonumber
\rd [X_z,Y_v^{\dagger} ]
     = &
     [\rd  X_z, Y_v^{\dagger}]
     +
     [X_z, \rd Y_v^{\dagger}]
     +
     [\rd X_z, \rd Y_v^{\dagger}]          \\
\nonumber
     = &
     [
        \cA_z X_z \rd t + \cB_z \rd W_z,
        Y_v^{\dagger}
    ] +
    [
        X_z,
        X_v^{\dagger}C^{\rT} \rd t + \rd W_v^{\dagger}D^{\rT}
    ]\\
\nonumber
      & +
    [
        \cA_z X_z \rd t + \cB_z \rd W_z,
        X_v^{\dagger}C^{\rT} \rd t + \rd W_v^{\dagger}D^{\rT}
    ]\\
\nonumber
     = &
    (\cA_z
    [
        X_z,
        Y_v^{\dagger}
    ]
    +
    [
        X_z,
        X_v^{\dagger}
    ]
    C^{\rT}
    )\rd t
    +
    \cB_z
    [
        \rd W_z,
        \rd W_v^{\dagger} ]D^{\rT}\\
\label{dXYCCR}
        = &
        (  \cA_z
    [
        X_z,
        Y_v^{\dagger}
    ]
    +
    2i\delta_{zv}N
    (
    \Theta
    C^{\rT}
    +
    \cB_z JD^{\rT})
    )\rd t,
\end{align*}
where use is also made of (\ref{WCCR}), (\ref{XCCR}).
Hence, the commutator $[X_z,Y_v^{\dagger}]$ satisfies a nonhomogeneous linear  ODE with constant coefficients and a constant forcing term:
$
        [
        X_z,
        Y_v^{\dagger}
    ]^{^\bullet}
        =
        \cA_z
    [
        X_z,
        Y_v^{\dagger}
    ]
    +
    2i\delta_{zv}N
    (
    \Theta
    C^{\rT}
    +
    \cB_z JD^{\rT})
$,
with zero initial condition (\ref{XYCCR0}). The solution of this initial value problem is
\begin{equation}
\label{XYCCRexp}
    [X_z(t),Y_v(t)^{\dagger}]
    =
    2i\delta_{zv}N
    \int_0^t
    \re^{\tau\cA_z}
    \rd \tau
    (\Theta C^{\rT}+ \cB_z JD^{\rT}),
\end{equation}
where, in the case $\det \cA_z \ne 0$, the matrix exponential  can be  integrated as
$
    \int_0^t
    \re^{\tau\cA_z}
    \rd \tau
    =
    \cA_z^{-1}
    (\re^{t\cA_z}-I_n)
$.
Since the right-hand side of (\ref{XYCCRexp}) vanishes for all $z\ne v$, the condition (\ref{xyXY0}) reduces to
\begin{equation}
\label{sumint}
    \sum_{z\in \mU_N}
    \int_0^t
    \re^{\tau\cA_z}
    \rd \tau
    (\Theta C^{\rT}+ \cB_z JD^{\rT})=0.
\end{equation}
In turn, the fulfillment of the latter equality at any time $t$ is equivalent to (\ref{xyCCRdot0}), which is obtained by repeatedly differentiating (\ref{sumint}) with respect to  $t$ and using the Cayley-Hamilton theorem.
\end{proof}

The following theorem reduces the CCR preservation conditions of Theorem~\ref{th:CCR} to a finite number of matrix algebraic equations for sufficiently large fragments of the chain.

\begin{thm}
\label{th:CCRexp}
The CCRs (\ref{xCCR}), (\ref{xyCCR}) for the dynamic and output variables of the systems $0, \ldots, N-1$  in the chain are preserved in time for any
\begin{equation}
\label{Nlarge}
    N\> \max(5,\, n+1)
\end{equation}
if and only if the state-space matrices  in (\ref{dx}), (\ref{dy}), (\ref{CDE}) satisfy
\begin{align}
\label{xCCRz0}
    A \Theta + \Theta A^{\rT} + BJB^{\rT} + E_+ D_+ JD_+^{\rT} E_+^{\rT} + E_- D_- JD_-^{\rT} E_-^{\rT} &= 0,\\
\label{xCCRz-1}
    E_+C_+\Theta + \Theta C_-^{\rT}E_-^{\rT} + BJD_-^{\rT}E_-^{\rT} + E_+D_+JB^{\rT} &= 0,\\
\label{xCCRz-2}
    E_+ D_+ J D_-^{\rT} E_-^{\rT} &= 0,\\
\label{xyCCRz00}
    \Theta C^{\rT} + BJD^{\rT} &= 0,\\
\label{xyCCRz0}
    (A_{p,1} E_+D_+ + A_{p,-1} E_- D_-)JD^{\rT} &= 0
\end{align}
for all $p=1,\ldots, n-1$. Here, the matrices $A_{p,s} \in \mR^{n\x n}$ are computed recursively as
\begin{equation}
\label{Anext}
    A_{p,s}
    =
        AA_{p-1,s} + E_+ C_+A_{p-1,s+1} + E_-C_- A_{p-1,s-1}
\end{equation}
for $s =0,\pm 1,\ldots, \pm p$, and $A_{p,s} := 0$ for $|s|>p$,
with initial condition $A_{0,0} = I_n$.
\end{thm}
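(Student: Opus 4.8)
The plan is to start from the two conditions (\ref{Thetadot0}) and (\ref{xyCCRdot0}) supplied by Theorem~\ref{th:CCR}, which are already equivalent to preservation of the CCRs, and to show that for $N$ in the range (\ref{Nlarge}) each of them collapses to a finite, $N$-independent list of matrix identities. Everything rests on two structural facts. First, by (\ref{cAB}) and (\ref{K}), both $\cA_z = A + z^{-1}E_+C_+ + zE_-C_-$ and $\cB_z = B + z^{-1}E_+D_+ + zE_-D_-$ are Laurent polynomials in $z$ of width three. Second, the orthogonality relation $\sum_{z\in\mU_N} z^s = N$ when $N\mid s$ and $0$ otherwise lets a sum over the nodes of $\mU_N$ extract the constant term of a Laurent polynomial, provided the polynomial is narrow enough that the only multiple of $N$ in its support is $0$.

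For condition (\ref{Thetadot0}) I would substitute the above forms, use $\overline z = z^{-1}$ on $\mU$ (equivalently (\ref{star})) to write $\cA_z^*$ and $\cB_z^*$, and expand the left-hand side as a Laurent polynomial $G(z)=\sum_{s=-2}^{2} M_s z^s$. A direct computation identifies the coefficients $M_0$, $M_{-1}$, $M_{-2}$ with the left-hand sides of (\ref{xCCRz0}), (\ref{xCCRz-1}), (\ref{xCCRz-2}). Since $J$ and $\Theta$ are real antisymmetric, $G(z)$ is anti-Hermitian on $\mU$, which forces the symmetry $M_s = -M_{-s}^*$; hence $M_1,M_2$ vanish automatically once $M_{-1},M_{-2}$ do. Multiplying $G(z)$ by $z^2$ turns it into an ordinary matrix polynomial of degree $4$, so entrywise vanishing at the $N$ distinct nodes of $\mU_N$ with $N\ge 5$ forces every $M_s=0$; conversely the three equations plus the symmetry give $G\equiv 0$. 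This yields the equivalence with (\ref{xCCRz0})--(\ref{xCCRz-2}) and pins down the constant $5$ in (\ref{Nlarge}).

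For condition (\ref{xyCCRdot0}) I would first verify by induction, from $\cA_z^p=\cA_z\cA_z^{p-1}$ and matching powers of $z$, that $\cA_z^p=\sum_{s=-p}^{p}A_{p,s}z^s$ with the $A_{p,s}$ obeying precisely (\ref{Anext}). Writing $V_z:=\Theta C^{\rT}+\cB_z JD^{\rT}=V_{(0)}+z^{-1}V_{(-1)}+zV_{(1)}$, where $V_{(0)}=\Theta C^{\rT}+BJD^{\rT}$, $V_{(-1)}=E_+D_+JD^{\rT}$, $V_{(1)}=E_-D_-JD^{\rT}$, the product $\cA_z^pV_z$ is Laurent with powers $|s|\le p+1\le n$. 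For $N\ge n+1$ the only multiple of $N$ in this range is $s=0$, so the orthogonality relation collapses the sum in (\ref{xyCCRdot0}) to $N$ times the constant term $A_{p,0}V_{(0)}+A_{p,1}V_{(-1)}+A_{p,-1}V_{(1)}$. Reading $p=0$ gives $V_{(0)}=0$, i.e.\ (\ref{xyCCRz00}); feeding this back into $p=1,\ldots,n-1$ leaves $(A_{p,1}E_+D_+ + A_{p,-1}E_-D_-)JD^{\rT}=0$, i.e.\ (\ref{xyCCRz0}). This accounts for the constant $n+1$ in (\ref{Nlarge}).

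The two reductions are conceptually clean, so the principal effort is bookkeeping. The step I expect to demand the most care is the expansion of (\ref{Thetadot0}): correctly collecting the five Laurent coefficients $M_s$, matching $M_0,M_{-1},M_{-2}$ to the stated equations, and checking that the anti-Hermitian symmetry genuinely eliminates the positive-index identities so that exactly three remain. The degree bounds that drive the root-counting argument (width $5$, degree $\le 4$ for the first condition, and support $|s|\le n$ for the second) are elementary but must be tracked exactly, as they are precisely what forces the threshold $\max(5,\,n+1)$.
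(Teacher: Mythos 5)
Your proposal is correct and follows essentially the same route as the paper: reduce to the conditions of Theorem~\ref{th:CCR}, expand $\cA_z$, $\cB_z$ as width-three Laurent polynomials, use the antisymmetry $M_{-s}=-M_s^{\rT}$ to cut the first condition to the three equations (\ref{xCCRz0})--(\ref{xCCRz-2}), and use the orthogonality relation on $\mU_N$ (the circular sampling identity) to extract the constant Laurent coefficient of $\cA_z^p(\Theta C^{\rT}+\cB_zJD^{\rT})$, yielding (\ref{xyCCRz00}), (\ref{xyCCRz0}) with the recurrence (\ref{Anext}). Your root-counting phrasing for the $N\geqslant 5$ threshold is just a repackaging of the paper's DFT-inversion argument, so no substantive difference remains.
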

\begin{proof}
By using the partitioning of the matrices $C$, $D$, $E$ in  (\ref{CDE}) and the matrix $K_z$ from (\ref{K}), it follows that the matrices $\cA_z$, $\cB_z$ in (\ref{cAB}) take the form
\begin{align}
\label{cA}
    \cA_z & = A + z^{-1} E_+ C_+ + z E_-C_-,\\
\label{cB}
    \cB_z & = B + z^{-1} E_+ D_+ + z E_-D_-.
\end{align}
Hence, in view of (\ref{star}), the left-hand side of (\ref{Thetadot0}) is a meromorphic function of $z$ whose Laurent series involves five powers $z^s$ with $s = 0, \pm 1, \pm 2$:
\begin{align}
\nonumber
      L_z
      := & \cA_z \Theta +  \Theta \cA_z^* + \cB_z J \cB_z^* \\
\nonumber
      = &
      (A + z^{-1} E_+ C_+ + z E_-C_-) \Theta\\
\nonumber
& + \Theta (A^{\rT} + z C_+^{\rT}E_+^{\rT} + z^{-1} C_-^{\rT} E_-^{\rT})\\
\nonumber
    & +
    (B + z^{-1} E_+ D_+ + z E_-D_-)\, J \\
\label{Laur1}
& \x (B^{\rT}+z D_+^{\rT}E_+^{\rT} + z^{-1} D_-^{\rT} E_-^{\rT})= \sum_{|s|\< 2} z^s M_s.
\end{align}
Here, the coefficients $M_s\in \mR^{n\x n}$ depend on the exponent $s$ in an antisymmetric fashion: $M_{-s} = - M_s^{\rT}$. Hence, in order for (\ref{Thetadot0}) to hold, it is sufficient that $M_0 = M_{-1} = M_{-2} = 0$. Direct calculation of these three matrices from (\ref{Laur1}) leads to (\ref{xCCRz0})--(\ref{xCCRz-2}). However, the relations (\ref{xCCRz0})--(\ref{xCCRz-2}) are necessary for (\ref{Thetadot0}) only if the size $N$ of the chain fragment is large enough. More precisely, the coefficients of the Laurent series in (\ref{Laur1})  can be uniquely reconstructed from the map $\mU_N \ni z \mapsto L_z$ as
\begin{equation}
\label{LM}
    M_s = \frac{1}{N}\sum_{z\in \mU_N} z^{-s} L_z,
    \qquad
    s = 0,\pm 1, \pm 2,
\end{equation}
provided $N> \max\{|q-s|:\, -2\< q,s\< 2\} =4$. This is a manifestation of the circular version \cite{S_1979} of the Whittaker-Kotel'nikov-Shannon sampling theorem \cite{J_1977} and follows from the identity
\begin{equation}
\label{zq}
    \frac{1}{N}
    \sum_{z\in \mU_N}
    z^q
    =
    \left\{
        \begin{array}{ccc}
            1 & {\rm if}\ q \equiv 0\, (\mod N)\\
            0 & {\rm otherwise}
        \end{array}
    \right..
\end{equation}
The latter holds for all integers $q$ and, in application to (\ref{Laur1}), yields
$$
    \frac{1}{N}
    \sum_{z\in \mU_N}
    z^{-s} L_z
    =
    \frac{1}{N}
    \sum_{q=-2}^2
    M_q
    \sum_{z\in \mU_N}
    z^{q-s}
    =
    \sum_{|q|\< 2:\ q\equiv s (\mod N)}
    M_q.
$$
Therefore, if $N\> 5$, then the fulfillment of $L_z = 0$ for all $z \in \mU_N$ in  (\ref{Thetadot0}) indeed implies that $M_s=0$ for all $s = 0,\pm 1, \pm 2$ in view of  (\ref{LM}), thus establishing the necessity of (\ref{xCCRz0})--(\ref{xCCRz-2}) for (\ref{Thetadot0}). We will now turn to the condition (\ref{xyCCRdot0}).
By applying (\ref{zq}) to a function $f$ of a complex variable with a finite Laurent series
$
    f(z) = \sum_{|s|\< p} c_sz^s
$
(whose pole orders at $z=0,\infty$ do not exceed $p$),
the average over the set of $N$th roots of unity reduces to a complex residue as
$$
    \frac{1}{N}
    \sum_{z\in \mU_N}
    f(z)
    =
    c_0
    =
    \frac{1}{2\pi i}
    \oint_{\mU}
    \frac{f(z)}{z}\rd z
    =
    \Res_{z = 0}\frac{f(z)}{z},
$$
provided $N > p$. By applying this ``averaged'' version of the circular sampling theorem to the functions
\begin{equation}
\label{Q}
    Q_p(z)
    :=
    \cA_z^p
    (\Theta C^{\rT}+\cB_z J D^{\rT})
\end{equation}
on the left-hand side of (\ref{xyCCRdot0}) with $p = 0,\ldots, n-1$, and using (\ref{cA}), (\ref{cB}),
it follows that for any $N > n$,
\begin{align}
\nonumber
    \frac{1}{N}
    \sum_{z\in \mU_N}
    Q_p(z)
    &=
    \Res_{z=0}\frac{Q_p(z)}{z}    \\
\nonumber
    &=\Res_{z=0}
        \sum_{|s|\< p}
        z^{s-1} A_{p,s}
        (\Theta C^{\rT}
        +
        (B + z^{-1} E_+ D_+ + z E_-D_-)JD^{\rT}
        )\\
\label{QQ}
    &=
    A_{p,0}(\Theta C^{\rT} + BJD^{\rT})
+
    (A_{p,1} E_+D_+
    +
    A_{p,-1} E_-D_-) JD^{\rT}.
\end{align}
Here, the matrices $A_{p,s}\in \mR^{n\x n}$ are defined as the coefficients of the Laurent series
\begin{equation}
\label{Aps}
    \cA_z^p = \sum_{|s|\< p} z^s A_{p,s}
\end{equation}
for the powers of the matrix $\cA_z$ from (\ref{cA})
and satisfy the recurrence relation (\ref{Anext}) initialized by $A_{0,0}=I_n$.
Therefore, (\ref{Q}), (\ref{QQ}) imply  that (\ref{xyCCRdot0}) is equivalent to
\begin{equation}
\label{AA}
    A_{p,0}(\Theta C^{\rT} + BJD^{\rT})
+
    (A_{p,1} E_+D_+
    +
    A_{p,-1} E_-D_-) JD^{\rT} = 0
\end{equation}
for all $p=0, \ldots, n-1$, provided $N > n$. Since $A_{0,0}=I_n$ and $A_{0,\pm 1} = 0$, then for $p=0$, the condition (\ref{AA}) takes the form of (\ref{xyCCRz00}). In turn,  (\ref{xyCCRz00}) reduces (\ref{AA}) to (\ref{xyCCRz0}) for $p=1, \ldots, n-1$.
It now remains to combine the inequalities $N\> 5$ and $N\> n+1$ into (\ref{Nlarge}) whose fulfillment ensures that (\ref{xCCRz0})--(\ref{xyCCRz0}) are equivalent to (\ref{Thetadot0}) and (\ref{xyCCRdot0}). \end{proof}

Application of Theorem~\ref{th:CCRexp} involves computation of the matrices $A_{p,s}$.
For example,     it follows directly from (\ref{cA}), (\ref{Aps}) that $A_{1,-1} = E_+C_+$, $A_{1,0} = A$, $
A_{1,1}= E_-C_-$, and the next iteration of (\ref{Anext}) yields
\begin{align*}
    A_{2,-2}& = (E_+C_+)^2,
    \qquad
    A_{2,-1}= A E_+C_+ + E_+C_+ A,\\
    A_{2,0}& = E_-C_- E_+C_+ + A^2 + E_+C_+E_-C_-,    \\
    A_{2,1}& = E_-C_- A + AE_-C_-,\qquad
    A_{2,2} = (E_-C_-)^2.
\end{align*}

\section{\bf Mean square performance with Toeplitz weights}\label{sec:quadro}

We will now consider a mean square functional defined  at time $t$ as the average of a quadratic form of dynamic variables associated with the fragment of the chain of size $N$:
\begin{equation}
\label{cE}
    \cE_N(t)
    :=
    \bE
    \left(
    \sum_{j,k=0}^{N-1}
    x_j(t)^{\rT} \sigma_{j-k} x_k(t)
    \right).
\end{equation}
Here, the quantum expectation
  $\bE \xi := \Tr (\rho \xi)$ of a quantum variable $\xi$ (such that $\rho \xi$ is a trace class operator) is taken over the density operator $\rho:= \rho_0\ox \upsilon$ which is the tensor product of an initial quantum state $\rho_0$ of the network and the vacuum state $\upsilon$ of the external fields \cite{P_1992}. Also, $\sigma_k$ is a given $\mR^{n\x n}$-valued sequence   which satisfies $\sigma_{-k} = \sigma_k^{\rT}$  for all integers $k$ and specifies a real symmetric block Toeplitz  weighting matrix $(\sigma_{j-k})_{0\< j,k<N}$. The block Toeplitz structure of the weighting matrix in (\ref{cE}) is in line with the translation invariance of the quantum network being considered. For what follows,
the weighting sequence is assumed to be absolutely summable, that is,
$    \sum_{k=-\infty}^{+\infty}\|\sigma_k\| < +\infty
$,
with $\|\cdot\|$ the Frobenius norm \cite{HJ_2007}, and hence, its (two-sided) Fourier transform is well-defined:
\begin{equation}
\label{Sigma}
  \Sigma_z := \sum_{k =-\infty}^{+\infty} z^{-k} \sigma_k,
  \qquad
  z \in \mU.
\end{equation}
The continuous matrix-valued map $\mU\ni z\mapsto \Sigma_z = \Sigma_z^*$
describes the spectral density of the weighting sequence.
The fulfillment of $\Sigma_z\succcurlyeq 0$ for all $z\in \mU$
is necessary and sufficient for 
$(\sigma_{j-k})_{0\< j,k<N}\succcurlyeq 0$ for all $N\> 1$; see, for example, \cite{GS_1958}. In this case, the self-adjoint operator, described by the quadratic form on the right-hand side of (\ref{cE}), is positive semi-definite, and  hence, $\cE_N \> 0$.
Mean square performance criteria are used in the CQLQG control/filtering problems \cite{NJP_2009,VP_2013a,VP_2013b}. The quantity $\cE_N$ in (\ref{cE}) will therefore be referred to as the current LQG cost of the network.
  Since we are concerned with linear dynamics and second-order moments of quantum variables, the initial network state $\rho_0$ is not required to be Gaussian \cite{KRP_2010}, and the use of the qualifier ``Gaussian'' is only a reference to  LQG formulations for classical systems \cite{AM_1989,KS_1972}.
The following theorem provides a stability criterion for the quantum network and computes the steady-state value of the LQG cost per site for large fragments of the lattice, which corresponds to the thermodynamic limit in equilibrium statistical mechanics \cite{R_1978}.

\begin{thm}
\label{th:LQGcost}
Suppose the matrix $\cA_z$ in (\ref{cAB}) is Hurwitz for all $z$ on the unit circle:
\begin{equation}
\label{stab}
    \max_{z\in \mU}\, \br\big(\re^{\cA_z}\big) <1.
\end{equation}
Then the LQG cost $\cE_N$ in (\ref{cE}) has an infinite-horizon limit per site for unboundedly increasing fragments of the chain:
\begin{equation}
\label{limlim}
    \lim_{N\to +\infty}
    \left(
        \frac{1}{N}
        \lim_{t\to +\infty}\cE_N(t)
    \right)
    =
    \frac{1}{2\pi i}
    \oint_{\mU}
    \Tr
    (
    \Sigma_z S_z)
    \frac{\rd z}{z}.
\end{equation}
Here, $\Sigma_z$ is the weighting spectral density in (\ref{Sigma}), and the matrix $S_z = S_z^*\succcurlyeq 0$
is the unique solution of 
the algebraic Lyapunov equation
\begin{equation}
\label{SLyap}
    \cA_z S_z + S_z\cA_z^* + \cB_z\Omega\cB_z^* = 0,
\end{equation}
where $\Omega$ is the quantum Ito matrix from (\ref{ww}), and $\cB_z$ is given by (\ref{cAB}).
\end{thm}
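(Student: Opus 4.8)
The plan is to pass to the spatial frequency domain, where by Section~\ref{sec:ztrans} the ring network decouples into the independent modes $F_z$ governed by (\ref{dX1}), compute the steady-state covariances of these modes, reassemble the LQG cost (\ref{cE}) as a finite average over $\mU_N$, and finally let $N\to\infty$ by recognising this average as a Riemann sum of Grenander--Szeg\"{o} type. The stability hypothesis (\ref{stab}) enters twice: it makes each $\cA_z$ Hurwitz (so steady states exist and the Lyapunov equation (\ref{SLyap}) has a unique solution $S_z = S_z^*\succcurlyeq 0$), and it provides the uniform control over $z\in\mU$ needed to interchange the two limits.

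First I would determine the steady-state mode covariances. Applying the quantum Ito formula to $X_zX_z^{\dagger}$ for the closed QSDE (\ref{dX1}), taking the vacuum expectation (so the martingale parts $\rd W_z$ drop out by adaptedness), and using the Ito table (\ref{WW}) in the form $\rd W_z\rd W_z^{\dagger} = N\Omega\rd t$, I obtain the matrix ODE $\frac{\rd}{\rd t}\bE(X_zX_z^{\dagger}) = \cA_z\bE(X_zX_z^{\dagger}) + \bE(X_zX_z^{\dagger})\cA_z^* + N\cB_z\Omega\cB_z^*$. Since $\cA_z$ is Hurwitz, this converges to the unique equilibrium, which by comparison with (\ref{SLyap}) is $NS_z$. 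For distinct roots $z\ne v$ the forcing term $N\delta_{zv}\cB_z\Omega\cB_v^*$ from (\ref{WW}) vanishes, so $\bE(X_zX_v^{\dagger})$ obeys a homogeneous stable Lyapunov-type ODE and decays to zero irrespective of the initial data; hence $\lim_{t\to\infty}\bE(X_zX_v^{\dagger}) = \delta_{zv}NS_z$.

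Next I would reassemble the cost. Writing $\cE_N$ as $\sum_{j,k}\Tr(\sigma_{j-k}^{\rT}\,\bE(x_jx_k^{\rT}))$ with $\sigma_{j-k}^{\rT}=\sigma_{k-j}$, then substituting the inverse DFT (\ref{invFourier}) together with the identity $X_z^{\rT}=X_{1/z}^{\dagger}$ (which follows from the self-adjointness of the $x_k$ and $\overline z = 1/z$ on $\mU$) and the mode covariances above, the cross terms collapse through the Kronecker delta and yield the translation-invariant two-site covariance $\lim_{t\to\infty}\bE(x_jx_k^{\rT}) = \frac{1}{N}\sum_{z\in\mU_N} z^{j-k}S_z$. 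Counting the $N-|m|$ index pairs with $k-j=m$ then gives $\frac{1}{N}\lim_{t\to\infty}\cE_N = \frac{1}{N}\sum_{z\in\mU_N}\sum_{|m|<N}\big(1-\tfrac{|m|}{N}\big)z^{-m}\Tr(\sigma_m S_z)$, where the Fej\'er weights $1-|m|/N$ arise from the pair count.

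The final and hardest step is the thermodynamic limit of this expression, which is a genuine double limit: the inner sum depends on $N$ through both its length and the weights $1-|m|/N$, while the outer average $\frac{1}{N}\sum_{z\in\mU_N}$ simultaneously refines a Riemann sum over $\mU$. To control it I would use (i) the absolute summability $\sum_m\|\sigma_m\|<\infty$ to bound the tails and ensure the Fej\'er-windowed inner sum converges, uniformly in $z\in\mU$, to $\sum_m z^{-m}\Tr(\sigma_m S_z) = \Tr(\Sigma_z S_z)$ with $\Sigma_z$ as in (\ref{Sigma}); and (ii) the uniform bound $\sup_{z\in\mU}\|S_z\|<\infty$ together with the continuity of $z\mapsto S_z$, both inherited from the continuous, uniformly Hurwitz dependence of the solution of (\ref{SLyap}) on $\cA_z,\cB_z$ over the compact circle $\mU$, so that $\frac{1}{N}\sum_{z\in\mU_N}\Tr(\Sigma_z S_z)$ converges to $\frac{1}{2\pi}\int_0^{2\pi}\Tr(\Sigma_{\re^{i\theta}}S_{\re^{i\theta}})\,\rd\theta = \frac{1}{2\pi i}\oint_{\mU}\Tr(\Sigma_z S_z)\frac{\rd z}{z}$. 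Combining these (for instance by splitting off $|m|\le M$, sending $N\to\infty$, then $M\to\infty$) delivers (\ref{limlim}). This joint passage to the limit, rather than either ingredient in isolation, is the main obstacle, and it is precisely the circular-sampling / Grenander--Szeg\"{o} mechanism flagged in the keywords; positivity of the limit is automatic since $\Sigma_z\succcurlyeq0$ and $S_z\succcurlyeq0$ are Hermitian.
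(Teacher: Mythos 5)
Your proposal is correct and follows essentially the same route as the paper: quantum Ito formula applied to $X_zX_v^{\dagger}$ giving the Lyapunov ODE with forcing $N\delta_{zv}\cB_z\Omega\cB_v^*$, steady-state mode covariances $N\delta_{zv}S_z$, reassembly of $\cE_N(\infty)/N$ as the Fej\'er-weighted sum $\frac{1}{N}\sum_{z\in\mU_N}\Tr(\wh{\Sigma}_N(z)S_z)$, and the two-step limit via uniform convergence $\wh{\Sigma}_N\to\Sigma$ plus the Riemann-sum convergence to the contour integral. The only cosmetic difference is that the paper controls the uniform convergence by the explicit bound $2\sum_{\ell\geqslant 1}\|\sigma_\ell\|\min(1,\ell/N)$ and dominated convergence rather than your $|m|\leqslant M$ splitting, which is the same estimate in different clothing.
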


\begin{proof}
Similarly to (\ref{dXCCR}), for any given roots $z,v\in \mU_N$, the matrix $X_zX_v^{\dagger}$ satisfies a QSDE
\begin{align}
\nonumber
\rd (X_zX_v^{\dagger} )
     = &
     (\rd  X_z) X_v^{\dagger}
     +
     X_z\rd X_v^{\dagger}
     +
     (\rd X_z) \rd X_v^{\dagger}          \\
\nonumber
     = &
    (\cA_z X_zX_v^{\dagger} + X_zX_v^{\dagger} \cA_v^*)\rd t
    +
    \cB_z \rd W_z \rd W_v^{\dagger} \cB_v^*\\
\nonumber
    & +
    \cB_z \rd W_z X_v^{\dagger} + X_z \rd W_v^{\dagger} \cB_v^*\\
\nonumber
     = &
    (\cA_z X_zX_v^{\dagger} + X_zX_v^{\dagger} \cA_v^* + N \delta_{zv} \cB_z\Omega\cB_v^*)\rd t\\
\label{dXX}
    & +
    \cB_z \rd W_z X_v^{\dagger} + X_z \rd W_v^{\dagger} \cB_v^*.
\end{align}
Since the forward increments $\rd W_z$ of the quantum Wiener process   in the vacuum state are uncorrelated with the adapted processes $X_v$, then averaging of both sides of (\ref{dXX}) leads to a Lyapunov ODE
\begin{equation}
\label{Sdot}
    \dot{S}_{z,v} = \cA_z S_{z,v} + S_{z,v}\cA_v^* + N \delta_{z,v} \cB_z\Omega\cB_v^*,
\end{equation}
where
\begin{equation}
\label{S}
    S_{z,v}(t):= \bE(X_z(t)X_v(t)^{\dagger})
\end{equation}
is the matrix of second-order cross-moments of the vectors $X_z$ and $X_v$ at time $t\> 0$. The solution of (\ref{Sdot}) is described by
$
    S_{z,v}(t) = \re^{t\cA_z}S_{z,v}(0)\re^{t\cA_v^*}
    +
    N \delta_{zv} \int_0^t \re^{\tau\cA_z}\cB_z\Omega\cB_v^*    \re^{\tau\cA_v^*}\rd \tau
$,
and, under the stability assumption (\ref{stab}), has the following limit
\begin{equation}
\label{Slim}
    S_{z,v}(\infty)
    :=
    \lim_{t\to +\infty}
    S_{z,v}(t) = N \delta_{zv} S_z,
\end{equation}
where
\begin{equation}
\label{Sz}
    S_z := \int_{0}^{+\infty} \re^{t\cA_z}\cB_z\Omega\cB_z^*    \re^{t\cA_z^*}\rd t
\end{equation}
is a complex positive semi-definite Hermitian matrix which is
the unique solution of (\ref{SLyap}). Note that $z\mapsto S_z$ is a continuous function on the unit circle.
Now, by combining the inverse DFT of $X_z$ in (\ref{invFourier})  with (\ref{S}), it follows that
\begin{align}
\nonumber
    \bE \big(x_j^{\rT}\sigma_{j-k} x_k\big)
    & =
    \Tr\big(
    \sigma_{j-k}^{\rT} \bE\big(x_jx_k^{\rT}\big)\big)\\
\label{xsigx}
& =
    \frac{1}{N^2}
    \sum_{z,v \in \mU_N}
    z^jv^{-k}
    \Tr
    \big(
    \sigma_{k-j} S_{z,v}\big),
\end{align}
where use is also made of the property $\sigma_k^{\rT} = \sigma_{-k}$ of the weighting sequence.
In view of (\ref{Slim}), the substitution of (\ref{xsigx}) into (\ref{cE}) leads to the following infinite-horizon limit for the LQG cost:
\begin{align}
\nonumber
    \cE_N(\infty)
    & :=
    \lim_{t\to +\infty}
    \cE_N(t)
 =
    \frac{1}{N^2}
    \sum_{j,k=0}^{N-1}
    \sum_{z,v \in \mU_N}
    z^jv^{-k}
    \Tr
    \big(
    \sigma_{k-j} S_{z,v}(\infty)\big)\\
\label{cElim}
    & =
    \frac{1}{N}
    \sum_{j,k=0}^{N-1}
    \sum_{z \in \mU_N}
    z^{j-k}
    \Tr
    \big(
    \sigma_{k-j} S_z\big)
    =
    \sum_{z \in \mU_N}
    \Tr
    \big(
    \wh{\Sigma}_N(z) S_z\big),
\end{align}
where
$    \wh{\Sigma}_N(z) := \sum_{\ell = 1-N}^{N-1}\left(1 - \frac{|\ell|}{N}\right)z^{-\ell} \sigma_{\ell}
$.
Due to the absolute summability of the weighting sequence, the function $\wh{\Sigma}_N(z)$ converges to $\Sigma_z$ in (\ref{Sigma}) uniformly over the unit circle as $N\to +\infty$:
\begin{equation}
\label{SS}
    \max_{z\in \mU}\|\wh{\Sigma}_N(z)-\Sigma_z\|
    \<
    2\sum_{\ell=1}^{+\infty}
    \|\sigma_{\ell}\|
    \min\Big(1, \frac{\ell}{N}\Big)
    \to 0,
\end{equation}
where the convergence to $0$ is obtained via the Lebesgue dominated convergence theorem. In view of the continuous dependence (and hence, boundedness) of $S_z$ on $z \in \mU$ in (\ref{Sz}) and the uniform convergence (\ref{SS}), it follows from (\ref{cElim}) that
\begin{equation}
\label{SSS}
    \left|
    \frac{\cE_N(\infty)}{N} - \frac{1}{N} \sum_{z\in \mU_N}    \Tr
    \big(
    \Sigma_z S_z\big)
    \right|
    \<
    \max_{z\in \mU}\|S_z\|        \,
    \max_{v\in \mU}\|\wh{\Sigma}_N(v)-\Sigma_v\|
    \to 0,
    \qquad
    N \to +\infty,
\end{equation}
where use is also made of the Cauchy-Bunyakovsky-Schwarz inequality for the Frobenius inner product of matrices. It now remains to note that $\frac{1}{N} \sum_{z\in \mU_N}    \Tr
    (\Sigma_z S_z)$ is  a Riemann sum which converges to the integral
    $$
        \frac{1}{2\pi}
        \int_{-\pi}^{\pi}
    \Tr
    \big(
    \Sigma_{\re^{i\varphi}} S_{\re^{i\varphi}}\big)
    \rd \varphi
    =
        \frac{1}{2\pi i}
    \oint_{\mU}
    \Tr
    (
    \Sigma_z S_z)
    \frac{\rd z}{z}.
    $$
    In combination with (\ref{SSS}), this implies the convergence of $\cE_N(\infty)/N$ to the same integral as $N\to +\infty$, thus establishing (\ref{limlim}).
\end{proof}

The proof of Theorem~\ref{th:LQGcost} shows that the continuous matrix-valued map  $\mU \ni z \mapsto S_z=S_z^*\succcurlyeq 0$, computed through (\ref{SLyap}), is a spatial spectral density of the dynamic variables of the quantum network in the thermodynamic limit:
$$
    \lim_{N \to +\infty}\,
    \lim_{t\to +\infty}
    \bE\big(x_j(t)x_k(t)^{\rT}\big)
    =
    \frac{1}{2\pi i}
    \oint_{\mU}
    z^{j-k-1}S_z \rd z.
$$
In this asymptotic sense, the spectral density encodes
the covariance structure of the dynamic variables, which is closely related to the Grenander-Szeg\"{o} limit
theorem for Toeplitz forms
\cite{GS_1958}.
In view of condition (\ref{Thetadot0}) of Theorem~\ref{th:CCR},
the  spectral density involves
the common CCR matrix of component systems in the network as
$
    \Im S_z = \Theta
$. Therefore, the fact, that
$S_z = \Re S_z + i \Theta \succcurlyeq 0$ for all $z \in \mU$, is
a generalized Heisenberg uncertainty principle \cite{H_2001} in the spatial frequency domain.

\section{\bf Two-dimensional lattice of linear quantum stochastic systems}\label{sec:2dlattice}

The results of the previous sections can be extended to networks of quantum systems with nearest neighbour interaction on lattices of high  dimension.
The constituent blocks of such a network are labelled by multiindices, and the parameter of the spatial DFT becomes multivariate. 
For better visualizability, we will consider a two-dimensional lattice of linear quantum stochastic systems  which is shown in Figs.~\ref{fig:net}, \ref{fig:netblock} and is governed by a set of QSDEs
\begin{align}
\label{dxnet}
    \rd x_{jk}
    &=
    A x_{jk} \rd t + B \rd w_{jk} + E \rd u_{jk},\\
\label{dynet}
    \rd y_{jk}
    &=
    C x_{jk} \rd t + D \rd w_{jk}.
\end{align}
\begin{figure}[htbp]
\centering
\includegraphics[width=12cm]{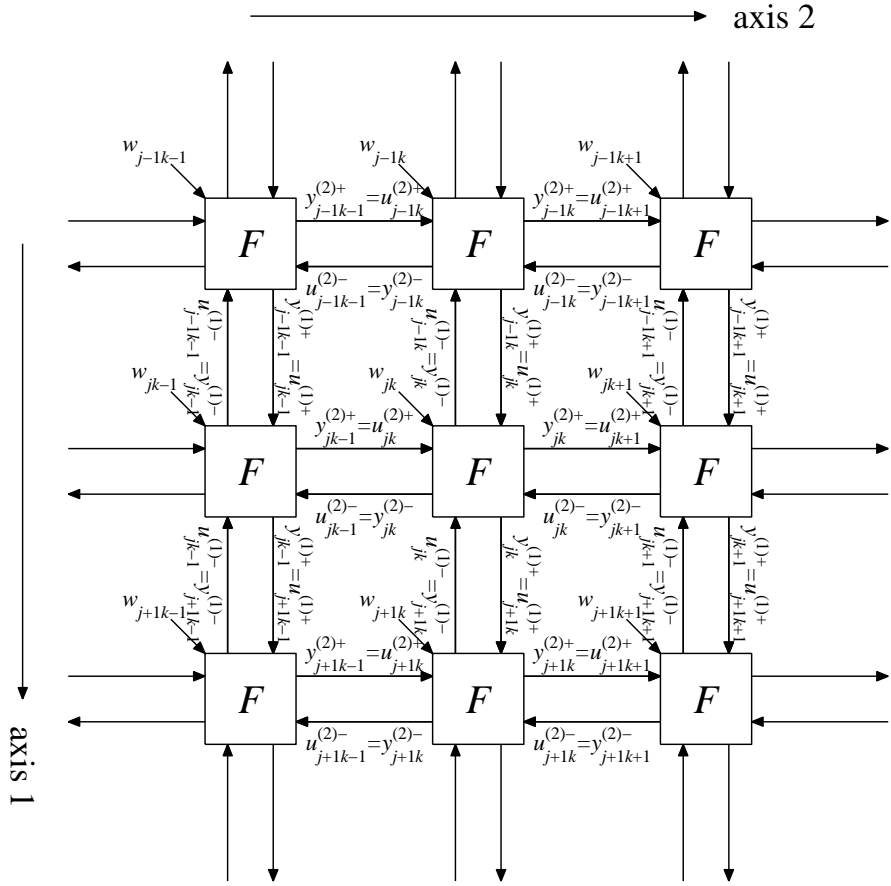}\vskip5mm
\caption{A fragment of a two-dimensional lattice of linear quantum stochastic systems with nearest neighbour interaction.}
    \label{fig:net}
\end{figure}
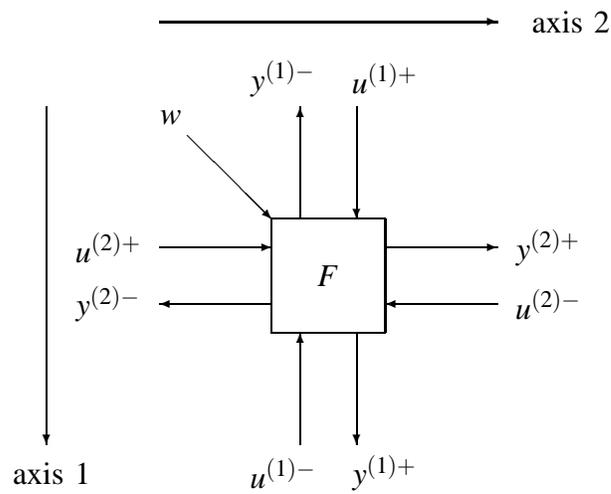
\begin{figure}[htbp]
\centering
\unitlength=0.75mm
\linethickness{0.4pt}
\begin{picture}(70.00,70.00)
    \put(20, 25){
        \framebox(20,20)[cc]{{$F$}}
       \put(-60,40){\vector(0,-1){60}}
       \put(-40,55){\vector(1,0){60}}
        \put(-60,-25){\makebox(0,0)[cc]{ axis 1}}
        \put(32,55){\makebox(0,0)[cc]{ axis 2}}

        \put(-5,40){\vector(0,-1){20}}
        \put(-5,0){\vector(0,-1){20}}

        \put(-15,20){\vector(0,1){20}}
        \put(-15,-20){\vector(0,1){20}}

        \put(-20,5){\vector(-1,0){20}}
        \put(20,5){\vector(-1,0){20}}
        \put(-40,15){\vector(1,0){20}}
        \put(0,15){\vector(1,0){20}}

        \put(-38,38){\makebox(0,0)[cc]{{$w$}}}
        \put(-35,35){\vector(1,-1){15}}



        \put(-49,15){\makebox(0,0)[cc]{{$u^{(2)+}$}}}
        \put(-49,5){\makebox(0,0)[cc]{{$y^{(2)-}$}}}

        \put(29,15){\makebox(0,0)[cc]{{$y^{(2)+}$}}}
        \put(29,5){\makebox(0,0)[cc]{{$u^{(2)-}$}}}

        \put(0,-25){\makebox(0,0)[cc]{{$y^{(1)+}$}}}
        \put(-18,-25){\makebox(0,0)[cc]{{$u^{(1)-}$}}}

        \put(0,45){\makebox(0,0)[cc]{{$u^{(1)+}$}}}
        \put(-18,45){\makebox(0,0)[cc]{{$y^{(1)-}$}}}

    }
\end{picture}\vskip5mm
\caption{A common building block of the two-dimensional lattice of systems.
}
\label{fig:netblock}
\end{figure}
\noindent Here, the pairs of indices $0\< j,k < N$ label the constituent systems with $n$-dimensional vectors $x_{jk}$ of dynamic variables  in a square fragment of the lattice of size  $N\x N$. Also,
\begin{equation}
\label{uynet}
    u_{jk}
    :=
    {\begin{bmatrix}
        u_{jk}^{(1)}\\
        u_{jk}^{(2)}
    \end{bmatrix}},
    \qquad
    u_{jk}^{(\alpha)}
    :=
    {\begin{bmatrix}
        u_{jk}^{(\alpha)+}\\
        u_{jk}^{(\alpha)-}
    \end{bmatrix}},
    \qquad
    y_{jk}
    :=
    {\begin{bmatrix}
        y_{jk}^{(1)}\\
        y_{jk}^{(2)}
    \end{bmatrix}},
    \qquad
    y_{jk}^{(\alpha)} :=
    {\begin{bmatrix}
        y_{jk}^{(\alpha)+}\\
        y_{jk}^{(\alpha)-}
    \end{bmatrix}},
\end{equation}
where $u_{jk}^{(\alpha)}$ and $y_{jk}^{(\alpha)}$ denote the vectors of
input and output quantum processes  (of common dimension $
    m^{(\alpha)}:= m_+^{(\alpha)}+m_-^{(\alpha)}
$) along the quantum channels parallel to the $\alpha$th reference axis. Accordingly, $A \in \mR^{n\x n}$, $B \in \mR^{n \x m_0}$ are given real matrices as before,  whilst the partitioning (\ref{CDE}) is replaced with
\begin{align}
\label{CDE12}
    C & :=
    {\begin{bmatrix}
        C^{(1)} \\
        C^{(2)}
    \end{bmatrix}},
    \qquad\quad\ \,
    D
    :=
    {\begin{bmatrix}
        D^{(1)} \\
        D^{(2)}
    \end{bmatrix}},
    \qquad\quad\
    E :=
    {\begin{bmatrix}
        E^{(1)} & E^{(2)}
    \end{bmatrix}},\\
\label{CDE+-}
    C^{(\alpha)}
    & :=
    {\begin{bmatrix}
        C_+^{(\alpha)} \\
        C_-^{(\alpha)}
    \end{bmatrix}},
    \qquad
    D^{(\alpha)}
    :=
    {\begin{bmatrix}
        D_{+}^{(\alpha)} \\
        D_{-}
    \end{bmatrix}},
    \qquad
    E^{(\alpha)} :=
    {\begin{bmatrix}
        E_+^{(\alpha)} & E_-^{(\alpha)}
    \end{bmatrix}},
\end{align}
where $C_{\pm}^{(\alpha)} \in \mR^{m_{\pm}^{(\alpha)} \x n}$, $D_{\pm}^{(\alpha)} \in \mR^{m_{\pm}^{(\alpha)} \x m_0}$, $E_{\pm}^{(\alpha)}\in \mR^{n\x m_{\pm}^{(\alpha)}}$  are associated with the $\alpha$th reference axis, $\alpha = 1,2$.
In addition to the QSDEs (\ref{dxnet}), (\ref{dynet}), the inputs and outputs of adjacent systems in the lattice are related by
\begin{equation}
\label{connectnet12}
    y_{j-1,k}^{(1)+}
    = u_{jk}^{(1)+},
    \qquad
    u_{jk}^{(1)-} = y_{j+1,k}^{(1)-},
    \qquad
    y_{j,k-1}^{(2)+}
    = u_{jk}^{(2)+},
    \qquad
    u_{jk}^{(2)-} = y_{j,k+1}^{(2)-},
\end{equation}
which corresponds to (\ref{connect}) along the reference axes;
see Fig.~\ref{fig:net}. The equations (\ref{dxnet}), (\ref{dynet}), (\ref{connectnet12})   for the $N\x N$-fragment of the lattice are equipped with a bivariate version of the PBCs (\ref{PBC}):
\begin{equation}
\label{PBC12}
     u_{0k}^{(1)+} = y_{N-1,k}^{(1)+},
    \qquad
    u_{N-1,k}^{(1)-} = y_{0k}^{(1)-},
    \qquad
    u_{j0}^{(2)+} = y_{j,N-1}^{(2)+},
    \qquad
    u_{j,N-1}^{(2)-} = y_{j0}^{(2)-}
\end{equation}
which are to be satisfied for all $ 0\< j,k< N$.
As in the one-dimensional case, the dynamics of the quantum network are encoded in a linear quantum stochastic system $F_z$ described in Sec. \ref{sec:ztrans} and governed by the QSDEs (\ref{dY}), (\ref{dX1}), except that the DFTs $X_z$, $W_z$, $U_z$, $Y_z$ in (\ref{XW})--(\ref{Y}) are now parameterized by points $z:= (z_1,z_2)\in \mU_N^2$ of a torus $\mU^2$. For example,
$$
    X_z(t)
    :=
    \sum_{j,k=0}^{N-1} z_1^{-j}z_2^{-k}x_{jk}(t).
$$
In view of (\ref{uynet}), (\ref{connectnet12}), (\ref{PBC12}), the matrix $K_z$ for the static unitary relation between $U_z$ and $Y_z$ in  (\ref{K}) is modified appropriately:
$$    K_z
    :=
    {\begin{bmatrix}
        K_{z_1}^{(1)} & 0\\
        0 & K_{z_2}^{(2)}
    \end{bmatrix}},
    \qquad
    K_v^{(\alpha)}
    :=
    {\begin{bmatrix}
        v^{-1}I_{m_+^{(\alpha)}} & 0\\
        0 & vI_{m_-^{(\alpha)}}
    \end{bmatrix}}.
$$
This corresponds to the block diagram in Fig. \ref{fig:netblockz}
\begin{figure}[htbp]
\centering
\unitlength=0.75mm
\linethickness{0.4pt}
\begin{picture}(70.00,95.00)
    \put(20, 35){
        \framebox(20,20)[cc]{{$F$}}

        \put(-5,45){\vector(0,-1){25}}
        \put(-5,0){\vector(0,-1){25}}

        \put(-15,20){\vector(0,1){25}}
        \put(-15,-25){\vector(0,1){25}}

        \put(-20,5){\line(-1,0){14}}
        \put(-36,5){\vector(-1,0){9}}
        \put(25,5){\line(-1,0){9}}
        \put(14,5){\vector(-1,0){14}}
        \put(-45,15){\line(1,0){9}}
        \put(-34,15){\vector(1,0){14}}
        \put(0,15){\line(1,0){14}}
        \put(16,15){\vector(1,0){9}}

        \put(-29,29){\makebox(0,0)[cc]{{$W_z$}}}
        \put(-26,26){\vector(1,-1){6}}



        \put(-55,15){\makebox(0,0)[cc]{{$U_z^{(2)+}$}}}
        \put(-55,5){\makebox(0,0)[cc]{{$Y_z^{(2)-}$}}}

        \put(20,30){\framebox(10,10)[cc]{{$z_2^{-1}$}}}
        \put(25,15){\vector(0,1){15}}
        \put(20,35){\line(-1,0){4}}
        \put(14,35){\line(-1,0){18}}
        \put(-6,35){\line(-1,0){8}}
        \put(-16,35){\line(-1,0){18}}
        \put(-36,35){\vector(-1,0){9}}
        \put(-45,35){\vector(0,-1){20}}

        \put(-45,5){\vector(0,-1){15}}
        \put(-50,-20){\framebox(10,10)[cc]{{$z_2$}}}
        \put(-40,-15){\line(1,0){4}}
        \put(-34,-15){\line(1,0){18}}
        \put(-14,-15){\line(1,0){8}}
        \put(-14,-15){\line(1,0){8}}
        \put(-4,-15){\line(1,0){18}}
        \put(16,-15){\vector(1,0){9}}
        \put(25,-15){\vector(0,1){20}}

        \put(35,15){\makebox(0,0)[cc]{{$Y_z^{(2)+}$}}}
        \put(35,5){\makebox(0,0)[cc]{{$U_z^{(2)-}$}}}

        \put(-5,-25){\vector(1,0){15}}
        \put(10,-30){\framebox(10,10)[cc]{{$z_1^{-1}$}}}
        \put(15,-20){\vector(0,1){65}}
        \put(15,45){\vector(-1,0){20}}

        \put(-15,45){\vector(-1,0){15}}
        \put(-40,40){\framebox(10,10)[cc]{{$z_1$}}}
        \put(-35,40){\vector(0,-1){65}}
        \put(-35,-25){\vector(1,0){20}}

        \put(-1,-32){\makebox(0,0)[cc]{{$Y_z^{(1)+}$}}}
        \put(-17,-32){\makebox(0,0)[cc]{{$U_z^{(1)-}$}}}

        \put(-1,52){\makebox(0,0)[cc]{{$U_z^{(1)+}$}}}
        \put(-17,52){\makebox(0,0)[cc]{{$Y_z^{(1)-}$}}}

    }
\end{picture}\vskip5mm
\caption{A block diagram of the $z$-transformed QSDEs for the two-dimensional lattice. }
\label{fig:netblockz}
\end{figure}
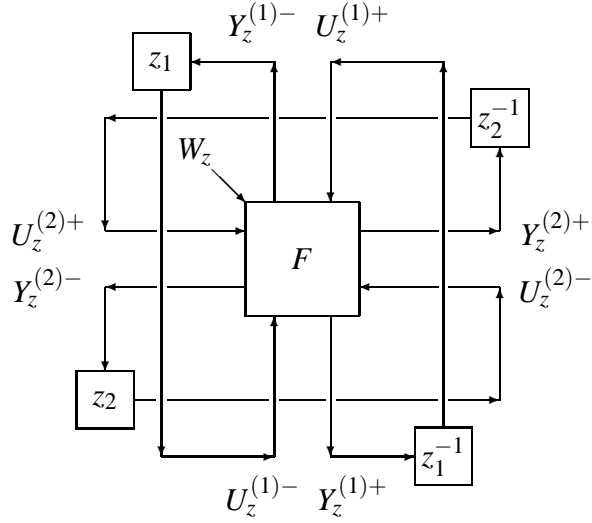
and, in combination with (\ref{CDE12}), (\ref{CDE+-}), yields the following state-space matrices  in the spatial frequency domain:
\begin{align}
\label{cAnet}
    \cA_z & = A + \sum_{\alpha=1}^2 \left(z_{\alpha}^{-1} E_+^{(\alpha)} C_+^{(\alpha)} + z_{\alpha} E_-^{(\alpha)}C_-^{(\alpha)}\right),\\
\label{cBnet}
    \cB_z & = B + \sum_{\alpha=1}^2 \left(z_{\alpha}^{-1} E_+^{(\alpha)} D_+^{(\alpha)} + z_{\alpha} E_-^{(\alpha)}D_-^{(\alpha)}\right).
\end{align}
In the bivariate case, the CCRs (\ref{xCCR}), (\ref{xyCCR}) are replaced with
$$
    [x_{jk}, x_{\ell s}^{\rT}] = 2i\delta_{jk}\delta_{\ell s}\Theta,
    \qquad
    [x_{jk}, y_{jk}^{\rT}]    = 0,
    \qquad
    0\< j,k,\ell,s <N.
$$
In this case, Theorem~\ref{th:CCR} remains valid, except that the conditions (\ref{Thetadot0}), (\ref{xyCCRdot0}) are to be understood as their torus counterparts (with $\mU_N^2$ instead of  $\mU_N$) and applied to the modified matrices $\cA_z$, $\cB_z$ in (\ref{cAnet}), (\ref{cBnet}). Upon these modifications, Theorem~\ref{th:LQGcost} also extends to the bivariate case, with the spectral densities being defined on the torus $\mU^2$. However, the above  modifications
affect a particular form of Theorem~\ref{th:CCRexp} whose multivariate analogue will be discussed elsewhere.

\section*{\bf Acknowledgement}

IGV thanks Dr Matthew J. Woolley for useful discussions on optical lattices.

\end{document}